\algnewcommand\algorithmicforeach{\textbf{for each}}
\newcommand{\toolname}{\ensuremath{\mathsf{sharpASP}}}
\title{Exact ASP Counting with Compact Encodings}
\author {
    Mohimenul Kabir\textsuperscript{\rm 1},
    Supratik Chakraborty\textsuperscript{\rm 2},
    Kuldeep S Meel\textsuperscript{\rm 3}
}
\newcommand{\hide}[1]{}
\definecolor{supratikcolor}{HTML}{137403}
\definecolor{kuldeepcolor}{rgb}{0.2,0.6,0.6}
\definecolor{mahicolor}{rgb}{0,0.4,0.8}
\definecolor{todocolor}{rgb}{0.9,0.1,0.1}
\definecolor{changedcolor}{rgb}{0.42,0.27,0.57}
\definecolor{removedcolor}{rgb}{0.867,0.176,0.361}
\definecolor{reorganisedcolor}{rgb}{0.553,0.169,0.565}
\definecolor{newcolor}{rgb}{0.82,0.286,0.043}
\newcommand{\Card}[1]{|#1|}
\newcommand{\var}[1]{\mathsf{Var}(#1)}
\newcommand{\copyvar}[1]{\mathsf{CopyVar}(#1)}
\newcommand{\copyop}[1]{\mathsf{Copy}(#1)}
\newcommand{\decompose}[1]{\mathsf{Decomposition}(#1)}
\newcommand{\loopatoms}[1]{\mathsf{LA}(#1)}
\newcommand{\completion}[1]{\mathsf{Comp}(#1)}
\newcommand{\loopformula}[1]{\mathsf{LF}(#1)}
\newcommand{\loopa}[1]{\mathsf{Loops}(#1)}
\newcommand{\stable}[1]{\mathsf{AS}(#1)}
\newcommand{\decide}[1]{\mathsf{PickNonCopyVar}(#1)}
\newcommand{\external}[1]{\mathsf{ExtRule}(#1)}
\newcommand{\answer}[2]{\mathsf{AS}(#1,#2)}
\newcommand{\cnt}{\mathsf{Count}}
\newcommand{\cache}{\mathsf{Cache}}
\newcommand{\countsm}[2]{\mathsf{CntAS}(#1,#2)}
\newcommand{\countAS}[1]{\mathsf{CntAS}(#1)}
\newcommand{\copyatom}[1]{#1\textprime}
\newcommand{\true}{$\mathsf{true}$\xspace}
\newcommand{\false}{$\mathsf{false}$\xspace}
\newcommand{\ganak}{GANAK\xspace}
\newcommand{\dfour}{D4\xspace}
\newcommand{\sharpsattd}{SharpSAT-TD\xspace}
\newcommand{\asptosharpsat}{lp2sat+\shortsharpsattd\xspace}
\newcommand{\aspmcsharpsat}{aspmc+\shortsharpsattd\xspace}
\newcommand{\sharpasp}{\ensuremath{\mathsf{sharpASP}\xspace}}
\newcommand{\asproblog}{ASProblog}
\newcommand{\asproblogshort}{ASProb}
\newcommand{\shortganak}{G}
\newcommand{\shortsharpsattd}{STD}
\newcommand{\counter}[2]{\mathsf{Counter}(#1,#2)}
\newcommand{\answers}[1]{\Card{\mathsf{AS}(#1)}}
\newcommand{\body}[1]{\mathsf{Body}(#1)}
\newcommand{\at}[1]{\mathsf{atoms}(#1)}
\newcommand{\head}[1]{\mathsf{Head}(#1)}
\newcommand{\rules}[1]{\mathsf{Rules}(#1)}
\newcommand{\dependency}[1]{\mathsf{DG}(#1)}
\newcommand{\SB}{\{\,}%
\newcommand{\SE}{\,\}}%
\newtheorem{proposition}{Proposition}
\newtheorem{definition}{Definition}
\newtheorem{theorem}{Theorem}
\newtheorem{example}{Example}
\newcommand{\clingo}{clingo\xspace}
{%
  \endgroup
}%
{%
  \addtocounter{theorem}{-1}
  \endgroup
}%
\begin{document}

\maketitle

\begin{abstract}
    Answer Set Programming (ASP) has emerged as a promising paradigm
    in knowledge representation and automated reasoning owing to its
    ability to model hard combinatorial problems from diverse domains
    in a natural way.  Building on advances in propositional SAT
    solving, the past two decades have witnessed the emergence of
    well-engineered systems for solving the answer set satisfiability
    problem, i.e., finding models or answer sets for a given answer
    set program.  In recent years, there has been growing interest in
    problems beyond satisfiability, such as model counting, in the
    context of ASP.
    Akin to the early days of propositional model counting,
    state-of-the-art exact answer set counters do not scale well
    beyond small instances. 
    Exact ASP counters struggle with handling larger input formulas.
    The primary contribution of this paper is a new ASP counting
    framework, called {\toolname}, which counts answer sets avoiding
    larger input formulas.  This relies on an alternative way of
    defining answer sets that allows for the lifting of key techniques
    developed in the context of propositional model counting.  Our
    extensive empirical analysis over $1470$ benchmarks demonstrates
    significant performance gain over current state-of-the-art exact
    answer set counters.  Specifically, %
    by using {\toolname}, we were able to solve $1062$ benchmarks with
    PAR2 score of $3082$ whereas using prior state-of-the-art, we could
    only solve $895$ benchmarks with a PAR2 score of $4205$, all other
    experimental conditions being the same.
\end{abstract}
\section{Introduction}
Answer Set Programming (ASP) \cite{MT1999} is a declarative 
problem-solving approach with a wide variety of applications 
ranging from planning, diagnosis, scheduling, and product
configuration checking \cite{NYEP2016,BR2015,TSNS2003}.
An ASP program consists of a set of rules defined over
propositional atoms, where each rule logically expresses an implication 
relation. An assignment to the propositional atoms satisfying the ASP semantic is called an \emph{answer set}. 
In this paper, we focus on an important class of
ASP programs called \textit{normal logic programs} that have been used
in diverse applications (see for example~\cite{DM2017,BEEMR2007}), and
present a new technique to count answer sets of such programs, while
scaling much beyond state-of-the-art exact answer set counters.

In general, given a set of constraints in a theory, model counting
seeks to determine the number of models (or solutions) to the set of
constraints. From a computational complexity perspective, this can be
significantly harder than deciding whether there exists any solution
to the set of constraints, i.e. the satisfiability problem.  Yet, in
the context of propositional reasoning, compelling applications have
fuelled significant practical advances in propositional model
counting, also referred to as \#SAT~\cite{SRSM2019,Thurley2006}, over
the past decade.
This, in turn, has
ushered in new applications in quantified information
flow~\cite{BEHLMQ2018}, neural network verification~\cite{BSSMS2019}, computational biology,
and the like. The success of practical propositional
model counting in diverse application domains
have naturally led researchers to ask if practically efficient
counting algorithms can be devised for constraints beyond
propositional logic.
In particular, 
there has been growing interest in answer set
counting,
motivated by applications in probabilistic reasoning and
network reliability~\cite{KM2023}.

Early efforts to build answer set counters
sought to work by enumerating answer sets of a given
ASP program~\cite{FHMW2017,GKNS2007}.
While this works extremely well for answer set counts
upto a certain threshold, enumeration doesn't scale well
for problem instances with too many answer sets.
Therefore, subsequent approaches to answer set counting sought to leverage
the significant progress made in \#SAT techniques.  Specifically, Aziz
et al.~\cite{ACMS2015} integrated a \textit{component-caching} based
propositional model counting technique with \textit{unfounded set detection} 
to yield an answer set counter, called \asproblog.  In another line of work, dynamic
programming on a tree decomposition of the input problem instance has
been proposed to achieve scalability for ASP instances with low
treewidth~\cite{FHMW2017,FH2019}.  Yet another approach has been to
translate a given normal logic program $P$ into a propositional
formula $F$, such that there is a one-to-one correspondence between
answer sets of $P$ and models of
$F$~\cite{JN2011,Bomanson2017,Janhunen2006}. Answer sets of $P$
can then be counted by invoking an off-the-shelf propositional model
counter~\cite{SRSM2019} on $F$.  Though promising in principle, a naive application of
this approach doesn't scale well in practice owing to a blowup in the
size of the resulting formula $F$ when the implications between propositional atoms encoded in
  the program $P$ give rise to circular dependencies~\cite{LR2006}, which is a common occurrence when modeling 
  numerous real-world applications.  To address this,
  researchers have proposed techniques to transform the program to
  effectively break such circular dependencies and then use a
  treewidth-aware translation of the transformed program to a
  propositional formula (see, for example~\cite{EHK2021}).  However,
  breaking such circular dependencies can increase the treewidth of
  the resulting transformed problem instance, which in turn can
  adversely affect the performance of answer set counting.
Thus, despite significant advances, state-of-the-art exact answer set counters
are stymied by scalability bottlenecks, limiting their practical
applicability. Within this context, we ask the question: {\em Can we design a scalable answer set counter, 
accompanied by a substantial reduction in the size of the transformed input program, 
particularly when addressing circular dependencies?}

The principal contribution of this paper addresses the aforementioned question by introducing an alternative approach 
to exact answer set counting, called
{\toolname}, while alleviating key bottlenecks faced by earlier
approaches.  While a mere reduction in translation size does not inherently establish a scalable ASP counting solution 
for general scenarios, {\toolname} allows us to solve
larger and more instances of exact answer set counting than was
feasible earlier.  Similar to \asproblog, {\toolname} lifts component-caching based propositional model counting
algorithms to ASP counting. The key idea that makes this possible
is an alternative yet correlated perspective on defining answer sets.  This alternative definition makes it
possible to lift core ideas like decomposability and determinism in
propositional model counters to facilitate answer set counting.  Viewed
differently, transforming propositional model counters into our proposed ASP counting framework requires minimal adjustments. Our
experimental analysis demonstrates that {\toolname}, 
built using this approach, significantly outperforms the performance of previous 
state-of-the-art techniques across instances from diverse domains.
This serves to underscore the effectiveness of our approach over the combined might of earlier state-of-the-art exact
answer set counters.

The remainder of this paper is organized as follows. We present some
preliminaries and notations in~\Cref{sec:preliminaries}.
\Cref{sec:algorithm}
presents an alternative way of defining the answer set of an ASP
instance, which allows us to propose the answer set counting algorithm 
of {\toolname} in~\Cref{sec:implemetation}, where we also
present correctness arguments for our algorithm.
\Cref{sec:experiment} presents our experimental evaluation of the proposed 
answer set counting algorithm. Finally, we conclude our paper in~\Cref{sec:conclusion}.

\section{Preliminaries}
\label{sec:preliminaries}
Before delving into the details, we introduce some notation and
preliminaries from propositional satisfiability and answer set
programming.
\paragraph{Propositional Satisfiability.}
A propositional \emph{variable} $v$ takes one of two values: $0$
(denoting $\mathsf{false}$) or $1$ (denoting $\mathsf{true}$).  A
\emph{literal} $\ell$ is either a variable (positive literal) or its
negation (negated literal),
 and a \emph{clause} $C$ is a disjunction of literals.  For
 convenience of exposition, we sometimes represent a clause as a set of
 literals, with the implicit understanding that all literals in the
 set are disjoined in the clause.  A clause with a single literal
 is also called a \emph{unit clause}.
In general, the constraint represented by a clause $C \equiv
(\neg{v_1} \vee \ldots \vee \neg{v_k} \vee v_{k+1} \vee \ldots \vee
v_{k+m})$ can be expressed as a logical \textit{implication}: $(v_1
\wedge \ldots \wedge v_k) \rightarrow (v_{k+1} \vee \ldots \vee
v_{k+m})$.  If $k = 0$, the antecedent of the above implication is
$\mathsf{true}$, and if $m = 0$, the consequent is $\mathsf{false}$.
A \emph{conjunctive normal form (CNF)} formula $\phi$ is a conjuction
of clauses.  When there is no confusion, a CNF formula is also
sometimes represented as a set of clauses, with the implicit
understanding that all clauses in the set are conjoined to give the
formula.  We denote the set of variables in $\phi$ as $\var{\phi}$.

An assignment of a set $X$ of propositional variables is a mapping
$\tau: X \rightarrow \{0,1\}$.  For a variable $v \in X$, we define
$\tau(\neg{v}) = 1 - \tau(v)$.
Given a CNF formula $\phi$ (as a set of clauses) and an assignment
$\tau: X \rightarrow \{0,1\}$, where $X \subseteq \var{\phi}$,
the \textit{unit propagation} of $\tau$ on $\phi$, denoted
$\phi|_{\tau}$, is recursively defined as follows:

$\phi|_{\tau} = \begin{cases}
    1 & \text{if $\phi \equiv 1$}\\
    \phi'|_{\tau} & \text{if $\exists C \in \phi$ s.t. $\phi' = \phi \setminus \{C\}$,}\\ & \text{$\ell \in C$ 
    and $\tau(\ell) = 1$} \\%$\phi \equiv \phi' \wedge C$, $\ell \in C$ and $\tau(\ell) = 1$} \\
    \phi'|_{\tau} ~\cup~ \{C'\}  & \text{if $\exists C \in \phi$ s.t. $\phi' = \phi' \setminus \{C\}$,}\\ & \text{$\neg \ell \in C$,}  \text{$C' = C \setminus \{\neg \ell\}$} \\ &
       \text{ and ($\tau(\ell) = 1$ or $\{l\} \in \phi$)}
\end{cases}$

\noindent
Note that $\phi|_{\tau}$ always reaches a fixpoint. We say that $\tau$ \emph{unit propagates} to literal $\ell$ in $\phi$
if $\{\ell\} \in \phi|_{\tau}$, i.e. if $\phi|_{\tau}$ has a
unit clause with the literal $\ell$.

\paragraph{Answer Set Programming.}
An answer set program $P$ expresses logical constraints
between a set of propositional variables.  In the context of answer
set programming, such variables are also called \emph{atoms}, and the
set of atoms appearing in $P$ is denoted $\at{P}$.  For notational
convenience, we will henceforth use the terms ``variable'' and
``atom'' interchangeably.  A \textit{normal (logic) program} is a set
of rules of the following form:
\begin{align}
\label{eq:basic_rule}
   \text{Rule $r$:~~}  a \leftarrow b_1, \ldots, b_m, \sim c_1, \ldots,  \sim c_n
\end{align}

In the above rule, $\sim$ denotes \textit{default negation}, signifying \textit{failure to
  prove}~\cite{clark1978}.  For rule $r$ shown above, atom ``$a$'' is
called the \textit{head of $r$} and is denoted $\head{r}$.  Similarly,
the set of literals $\{b_1, \ldots, b_m, \sim c_1, \ldots, \sim c_n\}$ is
called the \textit{body} of $r$.  Specifically, $\{b_1, \ldots, b_m\}$
are the \emph{positive body atoms}, denoted $\body{r}^+$, and $\{c_1,
\ldots, c_n\}$ are the \emph{negative body atoms}, denoted
$\body{r}^-$.  For purposes of the following discussion, we use
$\body{r}$ to denote the conjunction $b_1 \wedge \ldots \wedge b_m
\wedge \neg{c_1} \wedge \ldots \wedge \neg{c_n}$.  Atoms that appear
in the head of a rule (like $a$ in rule $r$ above) have also been
called \emph{founded variables/atoms} in the
literature~\cite{ACMS2015}.

In answer set programming, an interpretation $M \subseteq \at{P}$
lists the $\mathsf{true}$ atoms, i.e., an atom $a$ is $\mathsf{true}$
iff $a \in M$.
An assignment $M$ satisfies $\body{r}$, denoted $M \models
\body{r}$, iff $\body{r}^+ \subseteq M$ and $\body{r}^- \cap M =
\emptyset$, where $\sim$ is interpreted classically, i.e., $M \models
\sim c_i$ iff $M \not\models c_i$.
The rule $r$ (see \Cref{eq:basic_rule}) specifies that if all atoms in
$\body{r}^+$ \textit{hold} and no atom in $\body{r}^-$ holds, then
$\head{r}$ also holds.  The assignment $M$ satisfies rule
$r$, denoted $M \models r$, if and only if whenever $M \models \body{r}$,
then $\head{r} \in M$.  Let $\rules{P}$ denote the set of
all rules in a normal program $P$.  Then, we say that an
assignment $M$ satisfies $P$, denoted $M \models P$, if and only
if $M \models r$ for each $r \in \rules{P}$.

Given an assignment (or set of atoms) $M$, the
\textit{Gelfond-Lifschitz (GL) reduct} of a program $P$ w.r.t. $M$ is defined as
$P^M = \{\head{r} \leftarrow \body{r}^+~|~ r \in \rules{P}, \body{r}^-
\cap M = \emptyset\}$~\cite{GL1988}.
A set of atoms $M$ is an answer set of $P$ if and only if $M \models
P^M$, but $N \not\models P^M$ for every proper subset $N$ of $M$.
The set of all answer sets of program $P$ is denoted by $\stable{P}$, and the
answer set counting problem is to compute $\answers{P}$, which is denoted by $\countAS{P}$.

\textit{Clark’s completion} \cite{clark1978} or \emph{program
completion} is a technique for obtaining a translation of a normal
program $P$ into a related, but not semantically equivalent,
propositional formula $\completion{P}$.  Specifically, for each atom
$a \in \at{P}$, we do the following:
\begin{enumerate}
    \item Let $r_1, \ldots, r_k \in \rules{P}$ such
      that $\head{r_1} =\ldots= \head{r_k} = a$, then we add the
      propositional formula $(a \leftrightarrow (\body{r_1} \vee
      \ldots \vee \body{r_k}))$ to $\completion{P}$.
    \item Otherwise, we add the literal $\neg{a}$ to $\completion{P}$.
\end{enumerate}
Finally, $\completion{P}$ is obtained as the logical conjunction of
all constraints added above.  It has been shown in the literature that
an answer set of $P$ satisfies $\completion{P}$ but not vice
versa~\cite{LZ2004}.

To overcome the above problem, the idea of \textit{loop formula} was
introduced in~\cite{LZ2004}.  We outline the construction of a loop
formula below.  Given a normal program $P$, we start by defining the
\textit{positive dependency graph} $\dependency{P}$ of $P$ as follows.
The vertices of $\dependency{P}$ are simply $\at{P}$.  For $a, b \in
\at{P}$, there exists an edge from $b$ to $a$ in $\dependency{P}$ if
there is a rule $r \in \rules{P}$ such that $a \in \body{r}$ and $b =
\head{r}$.
A set of atoms $L \subseteq \at{P}$ constitutes a \textit{loop} in $P$
if for every two atoms $x,y \in L$ there is a path from $x$ to $y$ in
$\dependency{P}$ such that all atoms (nodes) on the path are in
$L$.
An atom $a$ is called a \textit{loop atom} of $P$ if there is a loop
$L$ in $P$ such that $a \in L$.
We use $\loopa{P}$ and $\loopatoms{P}$ to denote the set of all loops
and the set of all loop atoms of $P$, respectively.
A program $P$ is called \textit{tight} if there is no loop in $P$; otherwise,  
$P$ is called \textit{non-tight}.
Lin and Zhao~\cite{LZ2004} showed that atoms in a loop cannot be
asserted $\mathsf{true}$ by themselves; instead they must be asserted
by some atoms external to the loop. %
Specifically, a rule $r$ is an \textit{external support} of a loop $L$
in $P$ if $\head{r} \in L$ and $\body{r}^+ \cap L = \emptyset$.
Let $\external{L}$ denote the set of all external supports of loop $L$
in $P$.
The loop formula $\loopformula{L,P}$~\cite{LL2003} of a loop $L$ in program $P$ can now be defined as follows:
\begin{align*}
    \loopformula{L,P} &= \big(\bigwedge_{a \in L} a\big) \rightarrow \bigvee_{r \in \external{L}} \body{r}
\end{align*}
Finally, the loop formula $\loopformula{P}$ of program $P$ is defined as the conjunction of loop formulas for all loops $L$ in $P$, i.e.
    $\bigwedge_{L \in \loopa{P}} \loopformula{L,P}$.
Let $M \subseteq \at{P}$ be a subset of atoms of $P$.  We use $\tau^M:
\at{P} \rightarrow \{0,1\}$ to denote the assignment corresponding to
$M$, i.e.  $\tau^M(v) = 1$ if $v \in M$ and $\tau^M(v) = 0$ otherwise,
for all $v \in \at{P}$.  Then $M$ is an answer set of $P$ if and only
if $\tau^M$ satisfies the propositional formula $\completion{P} \wedge
\loopformula{P}$~\cite{LZ2004}.

\subsection{Related Work}\label{sec:relatedwork}

The decision version of normal logic programs is NP-complete; therefore, the ASP counting for normal logic programs is \#P-complete~\cite{valiant1979}.  Given the \#P-completeness, a prominent line of work focused on ASP counting relies on translations from the ASP program to the CNF formula~\cite{LZ2004,Janhunen2004,Janhunen2006,JN2011}. Such translations often result in a large number of CNF clauses and thereby limit practical scalability for {\em non-tight} ASP programs. 
Eiter et al.~\shortcite{EHK2021} introduced T$_{\mathcal{P}}$-\textit{unfolding} to break cycles and produce a tight program. They proposed an ASP counter called aspmc, that performs a treewidth-aware Clark completion from a cycle-free program to the CNF formula. Jakl, Pichler, and Woltran~\shortcite{JPW09} extended the tree decomposition based approach for \#SAT due to Samer and Szeider~\shortcite{SS07} to Answer Set Programming and proposed a fixed-parameter tractable (FPT) algorithm for ASP counting. 
Fichte et al.~\shortcite{FHMW2017,FH2019} revisited the FPT algorithm due to Jakl et al. and  developed an exact model counter, called DynASP, that performs well on instances with low treewidth. 
Aziz et al.~\shortcite{ACMS2015} extended a propositional model counter to an answer set counter by integrating unfounded set detection. Kabir et al.~\shortcite{KESHFM2022} focused on lifting hashing-based techniques to ASP counting, resulting in an approximate counter, called ApproxASP, with $(\varepsilon,\delta)$-guarantees.

\section{An Alternative Definition of Answer Set}
\label{sec:algorithm}
Our algorithm for answer set counting crucially relies on an alternative way 
of defining the answer sets of a normal program
$P$.  We first introduce an operation called $\copyop{}$ that plays a
central role in this alternative definition.  Our $\copyop{}$ operation
is related to, but not the same as, a similar operation used
in~\asproblog. Specifically, founded variables (i.e. variables
appearing at the head of a rule) were the focus of the copy operation
used in~\asproblog.  In contrast, loop atoms in the program $P$
are the focus of the $\copyop{}$ operation in our approach.  We
elaborate more on this below.

\subsection{The Copy Operation}%

Given a normal program $P$, for every loop atom/variable $v$ in
$\loopatoms{P}$, let $\copyatom{v}$ be a fresh variable not present in
$\at{P}$. We refer to $\copyatom{v}$ as the \emph{copy variable of $v$}.
For $X \subseteq \loopatoms{P}$, 
we denote the set of copy variables corresponding to atoms in $X$ as $\copyatom{X}$.

The $\copyop{}$ operation, when applied to a normal program $P$,
returns a set of (implicitly conjoined) implications, defined as follows:
\begin{enumerate}
\item \label{l1:type1} (type 1) for every $v \in \loopatoms{P}$, the implication $\copyatom{v} \rightarrow v$ is in $\copyop{P}$.
\item \label{l1:type2} (type 2) for every rule $x \leftarrow a_1, \ldots a_k, b_1, \ldots b_m, \sim c_1, \ldots \sim c_n$ in $P$, where
  $x \in \loopatoms{P}$,
  $\{a_1, \ldots a_k\} \subseteq \loopatoms{P}$ and
  $\{b_1, \ldots b_m\} \cap \loopatoms{P} = \emptyset$,
  the implication $\copyatom{a_1} \wedge \ldots \copyatom{a_k} \wedge b_1 \wedge 
  \ldots b_m \wedge \neg{c_1} \wedge \ldots \neg{c_n} \rightarrow \copyatom{x}$ is in $\copyop{P}$.
\item No other implication is in $\copyop{P}$.
\end{enumerate}
Note that in implications of type~\ref{l1:type2}, copy variables are used
exclusively for positive loop atoms in the body of the rule and for the loop
atom in the head of the rule.  Specifically, if the head of a rule is
not a loop atom, we don't add any implication of type~\ref{l1:type2}
for that rule.  As
an extreme case, if $P$ is a tight program or $\loopatoms{P} =
\emptyset$, then $\copyop{P}$ is empty.

\paragraph{An Alternative Definition of Answer Set}
We now present a key observation that provides the basis for an alternative
definition of answer sets.  Akin to the existing definitions of answer
set~\cite{Janhunen2006,GLM2006,Lifschitz2010}, our definition seeks
justification for atoms within an answer set.  However, our definition
seeks to justify only loop atoms belonging to an answer set, while the
existing definitions, to the best of our knowledge, aim to justify
each atom in an answer set. The alternative definition derives from
the observation that under Clark's completion of a program, if the loop
atoms of an answer set are justified, then the remaining atoms of the
answer set are also justified. Thus, under Clark's completion, it
suffices to seek justifications for loop atoms. 
Unlike existing definitions of answer sets, our definition of
answer sets operates exclusively within the realm of Boolean formulas and 
employs unit propagation as a tool to decide whether an
atom is justified or not.

Recall from
Section~\ref{sec:preliminaries} the definition of $\phi|_{\tau}$,
i.e. unit propagation of an assignment $\tau$ on a CNF formula $\phi$.
Recall also that a CNF formula can be viewed as a set of clauses,
where each clause can be interpreted as an implication.  Therefore,
the set of implications $\copyop{P}$ can be thought of as representing
a CNF formula.  For an assignment $\tau: X \rightarrow \{0,1\}$ where
$X \subseteq \at{P}$, we use the notation $\copyop{P}|_{\tau}$ to
denote the (implicitly conjoined) set of implications that remain
after unit propagating $\tau$ on the CNF formula represented by
$\copyop{P}$.  Specifically, we say that $\copyop{P}|_{\tau} ~=
\emptyset$ if $\tau$ unit propagates to only unit clauses on copy
variables in the CNF formula represented by $\copyop{P}$.

\begin{theorem}
\label{theorem:sm_equal}
For a normal program $P$, let $X \subseteq \at{P}$ and let $\tau: X
\mapsto \{0,1\}$ be an assignment.  Let $M^\tau$ denote the set of
atoms of $P$ that are assigned $1$ by $\tau$.  Then $M^\tau \in
\stable{P}$ if and only if $\tau \models \completion{P}$ and
$\copyop{P}|_{\tau} = \emptyset$.
\end{theorem}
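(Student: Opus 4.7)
The plan is to deduce the theorem from the Lin--Zhao characterisation quoted just before the statement, namely $M^\tau \in \stable{P}$ iff $\tau \models \completion{P} \wedge \loopformula{P}$. Conditioning on $\tau \models \completion{P}$, it then suffices to establish the equivalence $\copyop{P}|_\tau = \emptyset \Leftrightarrow \tau \models \loopformula{P}$; both directions reduce to analysing which clauses of the CNF encoding of $\copyop{P}$ survive unit propagation of $\tau$.

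For the forward direction, I would assume $M^\tau \in \stable{P}$; then $\tau \models \completion{P}$ is immediate. For $\copyop{P}|_\tau = \emptyset$, I would use that $M^\tau$ is the least model of the GL-reduct $P^{M^\tau}$, so each $a \in M^\tau$ carries a finite derivation level under the standard iterative computation of this fixpoint. Inducting on level, I would show that unit propagation forces $\copyatom{v} = 1$ for every loop atom $v \in M^\tau$: the base case uses rules whose positive body contains no loop atom (their Type~2 clauses reduce directly to a unit clause on $\copyatom{v}$ under $\tau$), and the inductive step invokes the hypothesis on smaller-level loop body atoms. For $v \in \loopatoms{P} \setminus M^\tau$, Type~1 paired with $\tau(v) = 0$ immediately forces $\copyatom{v} = 0$. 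Remaining Type~2 clauses whose head is $\copyatom{x}$ for $x \notin M^\tau$ are satisfied, because Clark's completion ensures the corresponding rule body is falsified by $\tau$.

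For the backward direction I would argue by contradiction: assume $\tau \models \completion{P}$ and $\copyop{P}|_\tau = \emptyset$, but $\tau \not\models \loopformula{L,P}$ for some loop $L$. Then $L \subseteq M^\tau$ and no external support of $L$ is active under $\tau$. For each $v \in L$, Clark's completion still requires some rule $r$ with head $v$ and $\tau \models \body{r}$; since $r$ cannot be an external support, $\body{r}^+ \cap L \neq \emptyset$. After unit-propagating the non-loop body literals, the Type~2 clause for $r$ becomes $\neg \copyatom{a_1} \vee \ldots \vee \neg \copyatom{a_k} \vee \copyatom{v}$ with at least one $a_i \in L$. Within $\{\copyatom{v} : v \in L\}$, no copy can be forced to $0$ (every $v \in L$ has $\tau(v) = 1$, so Type~1 contributes nothing) and none to $1$ (each such derivation needs another copy in $L$ already set to $1$, producing an unfoundable chain). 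Hence some multi-literal clause on copy variables survives, contradicting $\copyop{P}|_\tau = \emptyset$.

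The main obstacle is making the ``unfoundable chain'' precise and ruling out that a roundabout schedule of unit propagation---perhaps via copies of loop atoms outside $L$---eventually forces $\copyatom{v} = 1$ for some $v \in L$. I would address this with a minimal-witness argument leveraging confluence of unit propagation: if $v^* \in L$ were the first atom in $L$ whose $\copyatom{v^*}$ gets derived as $1$, the responsible Type~2 clause would already require $\copyatom{a_i} = 1$ for some $a_i \in L$, contradicting minimality of $v^*$.
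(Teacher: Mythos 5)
Your proposal is correct and shares the paper's overall skeleton---both pass through the Lin--Zhao characterisation and reduce the theorem to the equivalence of $\copyop{P}|_\tau=\emptyset$ with $\tau\models\loopformula{P}$ given $\tau\models\completion{P}$---but the two directions are argued by genuinely different means. For ``$M^\tau\in\stable{P}$ implies $\copyop{P}|_\tau=\emptyset$'', the paper reasons by contradiction: it chases a surviving type 2 implication backwards through the copy variables in its antecedent and concludes the chain is either unbounded (impossible) or cyclic, the cycle giving an unfounded set. Your induction on derivation levels in the least model of the GL-reduct is a constructive alternative: it exhibits, for each loop atom of $M^\tau$, the propagation that actually resolves its copy variable, and thereby sidesteps the paper's delicate step of inferring, from a copy variable surviving in an antecedent, that some implication with that variable as consequent also survives. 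For the converse, both proofs locate a violated $\loopformula{L,P}$, but they point at different witnessing clauses: the paper attributes the surviving implication to a rule in $\external{L}$, whereas you derive it from a rule $r$ with $\head{r}\in L$, $\tau\models\body{r}$ and $\body{r}^+\cap L\neq\emptyset$, whose existence is forced by the completion precisely because no external support is active. Your choice is the more defensible one, since the clause of an external support rule contains a literal made true by $\tau$ (or by type 1 propagation) and is therefore deleted; your minimal-witness argument then correctly rules out $\copyatom{v}$ being propagated to $1$ for any $v\in L$. The one spot to tighten is your claim that no copy of an atom in $L$ can be forced to $0$: you dismiss only type 1 implications, but a type 2 clause can in principle yield a negative unit on a copy variable; you need the extra observation that, since $\tau\models\completion{P}$, any type 2 clause whose head atom is false under $\tau$ already contains a true literal and is removed before it can do so. That is a one-line repair, not a flaw in the approach.
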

\begin{proof}

(i)~(proof of `if part') \textbf{Proof By Contradiction}. Assume that
  $\tau \models \completion{P}$ and $\copyop{P}|_{\tau} = \emptyset$,
  but $M^\tau \not\in \stable{P}$.  Since $M^\tau \not\in \stable{P}$ and
  $\tau \models \completion{P}$, it implies that $\tau \not\models
  \loopformula{P}$.  Thus, there is a loop $L$ in $P$ such that $\tau
  \not\models \loopformula{L,P}$.
Assume that $L$ is comprised of the set of loop atoms $\SB x_1, \ldots, x_k\SE$. Then $\tau \not\models x_1 \wedge 
\ldots \wedge x_k \rightarrow \bigvee_{r \in \external{L}} \body{r}$.
In other words, even if $\tau$ is augmented by setting $x_1 = \ldots =
x_k = 1$, the formula $\bigvee_{r \in \external{L}} \body{r}$
evaluates to $0$ under the augmented assignment. Now recall that
$\tau$ itself is an assignment to a subset of $\at{P}$, and it does
not assign any truth value to $\copyatom{x_1}, \ldots,
\copyatom{x_k}$. Therefore, there must be at least one type~\ref{l1:type2}
implication in $\copyop{P}|_{\tau}$, specifically one arising from a
rule $r \in \external{L}$, that does not unit propagate to a unit
clause or to $1$ under $\tau$.  This contradicts the premise that
$\copyop{P}|_{\tau}=\emptyset$.

(ii)~(proof of `only if part') \textbf{Proof By Contradiction}.
Suppose $M^\tau \in \stable{P}$.  We know that this implies $\tau
\models \completion{P} \wedge \loopformula{P}$.  We now show that in
this case, we must also have $\copyop{P}|_{\tau} = \emptyset$.
Suppose, if possible, $\copyop{P}|_{\tau} \neq \emptyset$. We ask if
an implication of type \ref{l1:type1}, say $\copyatom{v} \rightarrow
v$, can stay back in $\copyop{P}|_{\tau}$. If $v \in M^{\tau}$, then
$\tau(v) = 1$, and clearly the implication $\copyatom{v} \rightarrow
v$ doesn't stay back in $\copyop{P}|_{\tau}$.  If $v \not\in
M^{\tau}$, then $\tau(v) = 0$, and in this case $\tau$ unit propagates
to $\{\neg \copyatom{v}\}$, and hence the implication doesn't stay
back in $\copyop{P}|_{\tau}$ either.  Therefore, no implication of
type \ref{l1:type1} can stay back in $\copyop{P}|_{\tau}$.  Next, we
ask if any implication of type \ref{l1:type2} can stay back in
$\copyop{P}|_{\tau}$.  Suppose this is possible.  Note that for every
$v \in \at{P}$, either $v \in M^\tau$ or $v \not\in M^\tau$.
Therefore, $\tau(v)$ is either $0$ or $1$ for all $v \in \at{P}$.
Therefore, if $\copyop{P}|_{\tau} \neq \emptyset$, there must be some
$\copyatom{x_1} \in \var{\copyop{P}|_{\tau}}$ and there must a
(potentially simplified) implication $\copyatom{x_2} \wedge C_1
\rightarrow \copyatom{x_1}$ in $\copyop{P}|_{\tau}$, where $C_1$ is
either \true or a conjunction of copy variables.
The existence of copy variable $\copyatom{x_2}$ in
$\copyop{P}|_{\tau}$ implies the existence of another implication:
$\copyatom{x_3} \wedge C_2 \rightarrow \copyatom{x_2}$ in
$\copyop{P}|_{\tau}$.
Continuing this argument, we find that there are two cases to handle:~(i)~there are an unbounded
number of copy variables in $\copyop{P}|_{\tau}$, which contradicts
the fact that there can be at most
$\Card{\var{P}}$ copy variables.
(ii)~otherwise, there exists $i, j$ such that $\copyatom{x_i} = \copyatom{x_j}$ and $i < j$, which implies
that the set of variables $\SB x_i, \ldots x_{j-1} \SE$ constitutes an
\textit{unfounded set}.  However, this contradicts the fact that
$M^\tau \in \stable{P}$.
In either case, we reach a contradiction, thereby proving that
$\copyop{P}|_{\tau}$ is empty. This completes the proof.
\end{proof}

\begin{example}
\label{ex:least_model_1} 
Consider the normal program $P$ given by the rules $\{r_1=a \leftarrow
\sim b.~ \text{ } r_2=b \leftarrow \sim a.~ \text{ } r_3=c \leftarrow
a, b. \text{ }~ r_4=c \leftarrow d.~ \text{ } r_5=d \leftarrow a.~
\text{ } r_6=d \leftarrow b,c.~\text{ } r_7=e \leftarrow \sim a,\sim
b\}$.\\ This program has a single loop $L$ consisting of atoms $c$ and
$d$, i.e.  $\loopatoms{P} = \{c,d\}$.  Therefore, $\copyop{P}$ consists of the conjunction of 
implications: $\SB
\copyatom{c} \rightarrow c, \copyatom{d} \rightarrow d, a \wedge b
\rightarrow \copyatom{c}, \copyatom{d} \rightarrow \copyatom{c}, a
\rightarrow \copyatom{d}, b \wedge \copyatom{c} \rightarrow
\copyatom{d} \SE$.  Note that there are no variables $a', b', e'$ or
constraints involving them in $\copyop{P}$.  The followings are now
easily verified.
\begin{itemize}
    \item Consider $\tau_1$ that assigns $1$ to $b$ and $0$ to $a, c, d, e$. For the corresponding answer set $M^{\tau_1}$: $\SB b \SE$, $\copyop{P}|_{\tau_1} = \emptyset$
    \item Consider $\tau_2$ assigns $1$ to $a,c,d$ and $0$ to $b,e$.  For
      the corresponding answer set $M^{\tau_2}$: $\SB a,c,d \SE$, $\copyop{P}|_{\tau_2} = \emptyset$
    \item Consider $\tau_3$ that assigns $1$ to $b, c, d$ and $0$ to
      $a, e$. For the corresponding non-answer set $M^{\tau_3}$: $\SB
      b,c,d \SE$, $\copyop{P}|_{\tau_3} \neq \emptyset$
\end{itemize}
\end{example}

\section{Counting Answer Sets}
\label{sec:implemetation}
In this section, we first show how the alternative definition of answer sets
provides a new way to counting all answer sets of a given normal
program. Subsequently, we explore how off-the-shelf state-of-the-art
propositional model counters can be easily adapted to correctly count
answer sets by leveraging the alternative definition.

It is easy to see from~\Cref{theorem:sm_equal} that the count of
answer sets of a normal program $P$ can be obtained simply by counting
assignments $\tau \in 2^{\Card{\at{P}}}$ such that $\tau \models
\completion{P}$ and $\copyop{P}|_{\tau} = \emptyset$.
This motivates us to represent a normal program $P$ using a pair ($F$,
$G$), where $F = \completion{P}$ and $G = \copyop{P}$. Further, we discuss
below how key ideas in state-of-the-art propositional model counters
can be adapted to work with this pair representation of normal
programs to yield exact answer set counters.

\subsection{Decomposition}
Propositional model counters often \textit{decompose} the input CNF formula into \textit{disjoint subformulas} 
to boost up the counting efficiency~\cite{BP2000} -- for two formulas $\phi_1$ and $\phi_2$, if $\var{\phi_1} \cap \var{\phi_2} = \emptyset$, 
then $\phi_1$ and $\phi_2$ are \textit{decomposable}, i.e., we can count the number of models of $\phi_1$ and $\phi_2$ separately and multiply 
these two counts to get the number of models of $\phi_1 \wedge \phi_2$. 

Given a normal program, our proposed definition involves a pair of formulas: $F$ and $G$. Specifically, we define \emph{component decomposition} with respect to $(F, G)$ as follows:
\begin{definition}
\label{def:decomposition}
    $(F_1 \wedge F_2, G_1 \wedge G_2)$ is decomposable to $(F_1,G_1)$ and $(F_2,G_2)$ if and only if $(\var{F_1} \cup \var{G_1}) \cap (\var{F_2} \cup \var{G_2}) = \emptyset$.
\end{definition}
Finally, \Cref{lemma:establish_decomposition} offers evidence supporting the correctness of our proposed definition of decomposition in computing the number of answer sets.
\begin{proposition}
\label{lemma:establish_decomposition}
Let $(F_1 \wedge \ldots F_k,  G_1 \wedge \ldots G_k)$ is decomposed to $(F_1, G_1),  \ldots, (F_k, G_k)$ then 
$\countsm{F_1 \wedge \ldots F_k}{G_1 \wedge \ldots G_k} = \countsm{F_1}{G_1} \times \ldots \countsm{F_k}{G_k}$
\end{proposition}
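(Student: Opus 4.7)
The plan is to reduce to the binary case by induction on $k$. The base case $k=1$ is immediate, and once the claim is proved for $k=2$, the inductive step is a routine associativity argument: group $(F_1 \wedge \ldots \wedge F_{k-1}, G_1 \wedge \ldots \wedge G_{k-1})$ as one component and $(F_k, G_k)$ as the other, noting that the disjointness hypothesis lifts to this grouping.

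For the binary case, I would invoke \Cref{theorem:sm_equal} to rewrite $\countsm{F}{G}$ as the number of assignments $\tau$ over $\var{F} \cup \var{G}$ such that $\tau \models F$ and $G|_{\tau} = \emptyset$. Let $V_i = \var{F_i} \cup \var{G_i}$. The hypothesis $V_1 \cap V_2 = \emptyset$ means that every assignment $\tau$ over $V_1 \cup V_2$ factors uniquely as $\tau = \tau_1 \cup \tau_2$ with $\tau_i$ defined on $V_i$. First, I would verify the standard fact that $\tau \models F_1 \wedge F_2$ iff $\tau_1 \models F_1$ and $\tau_2 \models F_2$; this is immediate since every clause of $F_i$ involves only variables in $V_i$.

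The heart of the argument, and the step I expect to be the main obstacle, is showing that $(G_1 \wedge G_2)|_{\tau} = \emptyset$ iff $G_1|_{\tau_1} = \emptyset$ and $G_2|_{\tau_2} = \emptyset$. The key observation is that every implication in $G_i = \copyop{P_i}$ uses only variables (both ordinary atoms and copy variables) from $V_i$, so the CNF $G_1 \wedge G_2$ splits cleanly into two variable-disjoint subformulas. Unit propagation on a CNF cannot create a clause whose variables straddle a partition of variable-disjoint components: a literal derived from clauses in $G_i$ can only simplify other clauses in $G_i$. Hence the fixed point of unit propagation of $\tau$ on $G_1 \wedge G_2$ is the union of the fixed points of $\tau_1$ on $G_1$ and $\tau_2$ on $G_2$. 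Recalling the convention that ``$= \emptyset$'' means that all surviving clauses are unit clauses on copy variables, the decomposition of the fixed point immediately yields the claimed iff.

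Combining the two equivalences gives a bijection between satisfying assignments counted by $\countsm{F_1 \wedge F_2}{G_1 \wedge G_2}$ and pairs of assignments counted by $\countsm{F_1}{G_1}$ and $\countsm{F_2}{G_2}$, so the counts multiply. An easy induction then establishes the stated $k$-fold product, completing the proof.
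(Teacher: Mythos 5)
Your proposal is correct and follows essentially the same route as the paper's proof: both rest on the facts that an assignment over disjoint variable sets factors uniquely, that satisfaction of $F_1\wedge\ldots\wedge F_k$ factors clause-wise, and that unit propagation on variable-disjoint $G_i$'s proceeds independently so that $(G_1\wedge\ldots\wedge G_k)|_{\tau}=\emptyset$ iff each $G_i|_{\tau_i}=\emptyset$. The only differences are organizational --- you reduce to the binary case by induction and phrase the conclusion as a bijection, whereas the paper handles all $k$ components at once and proves the two inequalities $\ge$ and $\le$ separately.
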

\begin{proof}
By definition of decomposition, we know that
$\big(\var{F_i} \cup \var{G_i}\big) \cap \big(\var{F_j} \cup \var{G_j}\big)
= \emptyset$ for $1 \le i < j \le k$.  This, in turn, implies that
$\var{G_i} \cap \var{G_j} = \emptyset$ for $1 \le i < j \le k$.
Therefore, no variable (copy variable or otherwise) is common in $G_i$
and $G_j$, if $i \neq j$.  Hence, for every assignment
$\tau: \at{P} \rightarrow \{0,1\}$, unit propagation of $\tau$ on
$G_i$ and $G_j$ must happen completely independent of each other,
i.e. no unit literal obtained by unit propagation of $\tau$ on $G_i$
affects unit propagation of $\tau$ on $G_j$, and vice versa.  In other
words, $G_i|_{\tau} \wedge G_j|_{\tau} = (G_i \wedge G_j)|_{\tau}$.

Let $F = F_1 \wedge \ldots F_k$ and $G = G_1 \wedge \ldots G_k$.  In
the following, we use the notation $\tau$ to denote an assignment
$\at{P} \rightarrow \{0,1\}$, and $\tau_i$ to denote an assignment
$\at{P} \cap (\var{F_i} \cup \var{G_i}) \rightarrow \{0,1\}$, for
$1 \le i \le k$.  By virtue of the argument in the previous paragraph,
it is easy to see that the domains of $\tau_i$ and $\tau_j$ are
disjoint for $1 \le i < j \le k$.  We use the notation
$\tau_1 \cup \ldots \tau_k$ to denote the assignment
$\at{P} \rightarrow \{0,1\}$ defined as follows: if $v \in
\at{P} \cap \big(\var{F_i} \cup \var{G_i}\big)$, then
$(\tau_1 \cup \ldots \tau_k)(v) = \tau_i(v)$. The proof now consists
of showing the following two claims:
\begin{enumerate}
\item $\countsm{F_1}{G_1} \times \cdots \countsm{F_k}{G_k} \ge \countsm{F}{G}$.
\item $\countsm{F_1}{G_1} \times \cdots \countsm{F_k}{G_k} \le \countsm{F}{G}$.
\end{enumerate}
\noindent \textbf{Proof of part 1:}
Suppose $\tau \in \answer{F}{G}$.  By definition, $\tau \models F$ and
$G|_{\tau} = \emptyset$.  Since $F = F_1 \wedge \ldots F_k$, we know
that $\tau \models F_i$ for $1 \le i \le k$.  By the above definition
of $\tau_i$, it then follows that $\tau_i \models F_i$.  Similarly,
since unit propagation of $\tau$ on $G_i$ and $G_j$ happen
independently for all $i \neq j$, and since unit propagation of $\tau$
on $G = G_1 \wedge \ldots G_k$ gives $\emptyset$, we have
$G_i|_{\tau_i} = \emptyset$ as well.  It follows that
$\tau_i \in \answer{F_i}{G_i}$ for $1 \le i \le k$.  Therefore, every
$\tau \in \answer{F}{G}$ yields a sequence of
$\tau_i \in \answer{F_i}{G_i}$, for $1 \le i \le k$.  Since the
domains of all $\tau_i$'s are distinct, it follows that
$\countsm{F_1}{G_1} \times \cdots \countsm{F_k}{G_k} \ge \countsm{F}{G}$.

\noindent \textbf{Proof of part 2:}
Suppose $\tau_i \in \answer{F_i}{G_i}$ for $1 \le i \le k$.  By
definition, $\tau_i \models F_i$ and $G_i|_{\tau_i} = \emptyset$.
Since the domains of $\tau_i$ and $\tau_j$ are disjoint for all $1 \le
i < j \le k$, it follows that $(\tau_1 \cup \ldots \tau_k) \models
(F_1 \wedge \ldots F_k)$ and hence $\tau \models F$.  We have also
seen that $(G_1 \wedge \cdots G_k)|_{\tau} =
(G_1|_{\tau} \wedge \cdots G_k|_{\tau})$.  However, since $\var{G_i}$
is a subset of the domain of $\tau_i$, we have $(G_1 \wedge \cdots
G_k)|_{\tau} = (G_1|_{\tau_1} \wedge \cdots G_k|_{\tau_k})$.  Since
$G_i|_{\tau_i} = \emptyset$ for $1 \le i \le k$, it follows that
$(G_1 \wedge \cdots G_k)|_{\tau} = \emptyset$.  Therefore $G|_{\tau}
= \emptyset$.  Since $\tau \models F$ as well, we have
$\tau \in \answer{F}{G}$.  Therefore, every distinct sequence of
$\tau_i, 1 \le i \le k$ such that $\tau_i \in \answer{F_i}{G_i}$
yields a distinct $\tau \in \answer{F}{G}$.  It follows that
$\countsm{F_1}{G_1} \times \cdots \countsm{F_k}{G_k} \le \countsm{F}{G}$.

\noindent It follows from the above two claims that
$$\countsm{F_1}{G_1} \times \cdots \countsm{F_k}{G_k} = \countsm{F}{G}.$$

\end{proof}

	One of the drawbacks of the definition is its comparative weakness in relation to the conventional definition of decomposition. When dealing with a \textit{hard-to-decompose} program $(F,G)$, 
then the process of counting answer sets regresses to enumerating the answer sets of the program.

\subsection{Determinism}
Propositional model counters utilize \textit{determinism}~\cite{Darwiche2002}, which involves assigning one of the variables in a formula to either \false or \true.
The number of models of $\phi$ is then determined as the sum of the number of models in which a variable $x \in \var{\phi}$ is assigned to \false and \true.  A similar idea can be used for answer set counting using our pair representation as well.  To establish the correctness of the determinism employed in our approach, we first introduce two helper propositions: \Cref{lemma:proof_of_part_1} and \ref{lemma:proof_of_part_2}.

\begin{proposition}
	\label{lemma:proof_of_part_1}
For partial assignment $\tau$ and program $P$ represented as $(\completion{P}, \copyop{P})$,  
if $\completion{P}|_{\tau} = \emptyset$ and $\emptyset \subset \var{\copyop{P}|_\tau} \subseteq \copyvar{P}$, 
then $\exists L \in \loopa{P}$ s.t.  $L \subseteq M^\tau$ and  $\tau \not\models \loopformula{L,P}$.
\end{proposition}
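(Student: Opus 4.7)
The plan is to extract, from $\copyop{P}|_{\tau}$, a set of loop atoms that forms the desired loop $L$. Let $V = \{x \in \loopatoms{P} : \copyatom{x} \in \var{\copyop{P}|_\tau}\}$ be the loop atoms whose copy variable survives propagation; the premise gives $V \neq \emptyset$. The first step is to show $V \subseteq M^\tau$: for any $x \in V$, the type~1 implication $\copyatom{x} \rightarrow x$ cannot have simplified to $\{\neg \copyatom{x}\}$ under unit propagation of $\tau$ (which would have pinned $\copyatom{x}$ to false and removed it from $\var{\copyop{P}|_\tau}$), so $\tau(x) = 1$.

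The structural heart of the argument is to show that every $x \in V$ has an outgoing edge in $\dependency{P}$ landing in $V$. Since $\completion{P}|_{\tau} = \emptyset$ we have $\tau \models \completion{P}$, and $x \in M^\tau$ then forces some rule $r$ with $\head{r} = x$ and $\body{r}$ true under $\tau$. The type~2 implication for $r$ has its $b_j$ and $c_j$ literals removed by propagation, leaving a clause containing $\copyatom{x}$ together with $\neg \copyatom{a}$ for each positive loop body atom $a$ of $r$. If every such $\copyatom{a}$ had already been propagated to true, this clause would further collapse to the unit $\{\copyatom{x}\}$ and pin $\copyatom{x}$ to true, contradicting $\copyatom{x} \in \var{\copyop{P}|_\tau}$. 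So some positive loop body atom $a$ of $r$ has $\copyatom{a}$ unassigned, i.e.\ $a \in V$, and the dependency edge $x \rightarrow a$ lies in the subgraph $G_V$ of $\dependency{P}$ induced on $V$.

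Since every vertex of $G_V$ has out-degree at least one, the condensation of $G_V$ contains a sink strongly connected component $L$ that is non-trivial: a single-vertex sink must retain its outgoing edge inside itself and therefore carries a self-loop, while a multi-vertex sink is internally strongly connected by definition. Either way $L$ is a loop of $P$ with $L \subseteq V \subseteq M^\tau$. To finish, I would argue $\tau \not\models \loopformula{L,P}$ by contradiction: suppose some external support $r \in \external{L}$ has $\body{r}$ true under $\tau$, and write $y = \head{r} \in L$. Its type~2 clause reduces to $\{\neg \copyatom{a'} : a' \in \body{r}^+ \cap \loopatoms{P},\ \copyatom{a'} \text{ unassigned}\} \cup \{\copyatom{y}\}$. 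Either the first set is empty and unit propagation pins $\copyatom{y}$ to true, contradicting $y \in V$; or some $a' \in V$, but the external-support condition puts $a' \notin L$, giving an edge $y \rightarrow a'$ in $G_V$ from $L$ into $V \setminus L$ and contradicting the sink choice of $L$. So no external support has body true under $\tau$; the antecedent $\bigwedge_{a \in L} a$ holds while the consequent fails, yielding $\tau \not\models \loopformula{L,P}$.

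The hardest step will be the combined propagation argument establishing that outgoing edges of $V$ stay inside $V$: it must interleave Clark's completion (which supplies a supporting rule $r$ for each $x \in M^\tau$) with the unit-propagation behaviour of the type~2 clauses of $\copyop{P}$ (which force $\copyatom{x}$ to true unless at least one positive loop body copy atom is still free). Once that property is in hand, the sink-SCC extraction and the closing contradiction on external supports are essentially combinatorial bookkeeping.
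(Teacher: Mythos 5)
Your proof is correct, and while it shares the paper's core strategy---use the surviving type-2 implications to locate a cycle of true atoms in $\dependency{P}$, then show that cycle's loop formula fails---it organizes the argument differently and, in one place, more robustly. The paper builds an explicit chain of residual implications $C_1 \rightarrow \copyatom{x_{i_1}},\ C_2 \rightarrow \copyatom{x_{i_2}}, \ldots$ and invokes finiteness to force a repeat, whereas you define the set $V$ of loop atoms with surviving copy variables, prove (via Clark's completion plus the collapse-to-unit argument) that the induced subgraph on $V$ has minimum out-degree one, and take a \emph{sink} strongly connected component as $L$. That sink choice is a genuine improvement: in the paper's closing step, the claim that $\tau \models \body{r}$ for $r \in \external{L}$ forces $\copyatom{\head{r}}$ to be unit propagated is only immediate when every positive loop body atom of $r$ has its copy variable already propagated to true; since $\external{L}$ only guarantees $\body{r}^+ \cap L = \emptyset$, such a body atom could lie in $V \setminus L$ with its copy variable still free, and the propagation would stall. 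Your sink-SCC selection excludes exactly this case, because a surviving antecedent copy variable would yield an edge from $L$ into $V \setminus L$, contradicting sinkhood. The cost is a small amount of extra graph-theoretic bookkeeping (condensation, the self-loop case for singleton sinks); the benefit is that the final contradiction needs no unstated assumption about which loop the chain happens to close on.
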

\begin{proof}
	Since  $\emptyset \subset \var{\copyop{P}|_\tau} \subseteq \copyvar{P}$,  there exists a copy variable  $\copyatom{x_{i_1}} \in \var{\copyop{P}|_{\tau}}$ and an implication (simplified after unit propagation) of type~\ref{l1:type2} of the form $C_1 \rightarrow \copyatom{x_{i_1}}$ in $\copyop{P}|_{\tau}$, where $C_1$ is a non-empty conjunction of copy variables. Let $\copyatom{x_{i_2}} \in \var{C_1}$, then there must also exist another implication (simplified after unit propagation) $C_2 \rightarrow \copyatom{x_{i_2}}$ in $\copyop{P}|_{\tau}$, where $C_2$ is again a conjunction of copy variables. Accordingly, for $x_{i_3} \in \var{C_2} \setminus \var{C_1}$, we have another implication of the form $C_3 \rightarrow \copyatom{x_{i_3}}$ in $\copyop{P}|_{\tau}$.  
	Since the number of atoms is bounded, it must be the case that there exists $i_k$ such that there is an implication (simplified) of type~\ref{l1:type2} $C_k \rightarrow \copyatom{x_k}$ such that $C_k \setminus (C_1 \cup C_2 \ldots C_{k-1}) = \emptyset$. 
	
	Now, observation $C_k \setminus (C_1 \cup C_2 \ldots C_{k-1})
	= \emptyset$ implies existence of an atom set $L = \{x_{i_1},
	x_{i_2}, \ldots x_{i_{j}}\} \subseteq \{x_{i_1},
	x_{i_2}, \ldots x_{i_{k}}\}$ that forms a loop in
	$\dependency{P}$.  Given that
	$\var{\copyop{P}|_\tau} \subseteq \copyvar{P}$, we also know
	that $\tau$ assigns a value to every
	$x \in \var{\copyop{P}} \cap \at{P}$.  Furthermore, each of
	the atoms $x_{i_1}, \ldots x_{i_k}$ must have been assigned
	$1$ by $\tau$.  Otherwise, if any $x_{i_l}$ was assigned $0$
	by $\tau$, then $\tau$ would have unit propagated on
	$\copyop{P}|_\tau$ to $\neg x_{i_l}'$, which contradicts the
	observation that the copy variables $x_{i_1}', \ldots
	x_{i_k}'$ stayed backed in antecedents of implications of
	type \ref{l1:type2} in $\copyop{P}|_{\tau}$.  It follows
	that atoms in loop $L$ form a subset of atoms assigned
	$1$ by $\tau$.

	We have shown above that $\{x_{i_1}, x_{i_2}, \ldots
	x_{i_{j}}\}$ constitutes a loop in the positive dependency
	graph.  We now show by contradiction that
	$\tau \not\models \bigvee_{r \in \external{L}} \body{r}$. Indeed,
	if $\tau \models \bigvee_{r \in \external{L}} \body{r}$, let
	$x_{i_t}$ be $\head{r}$ for a rule $r$ such that
	$\tau \models \body{r}$. In this case, $\tau$ must have unit
	propagated to $\{x_{i_t}'\}$ in $\copyop{P}|_{\tau}$.  This
	contradicts the fact that the copy variables $x_{i_1}', \ldots
	x_{i_k}'$ stayed backed in antecedents of implications of
	type \ref{l1:type2} in $\copyop{P}|_{\tau}$.

	Therefore $\tau \models x_{i_1} \wedge \ldots x_{i_j}$ but
	$\tau \not\models \bigvee_{r \in \external{L}} \body{r}$.  This
	shows that $\tau \not\models \loopformula{L,P}$.
\end{proof}
\begin{proposition}
	\label{lemma:proof_of_part_2}
	For partial assignment $\tau$  and program $P$ represented as 
	$(\completion{P}, \copyop{P})$,  suppose $\tau \not \models \loopformula{L,P}$,
	 where $L = \{x_1, \ldots, x_k\}$.  Then there exists $\tau^+$ such that $\{\copyatom{x_1}, \ldots, \copyatom{x_k}\} 
	\subseteq \var{\copyop{P}|_{\tau^+}}$, $\completion{P}|_{\tau^+} = \emptyset$ and $\tau \subseteq \tau^+$. 
\end{proposition}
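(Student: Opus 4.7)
The plan is to unpack the hypothesis into two usable facts about $\tau$, construct an explicit extension $\tau^+$ of $\tau$ that satisfies $\completion{P}$ via unit propagation, and then confirm that every $\copyatom{x_i}$ survives in $\copyop{P}|_{\tau^+}$ by a case analysis on the two kinds of implications in $\copyop{P}$ that mention those copy variables. Unfolding $\loopformula{L,P} = (x_1 \wedge \ldots \wedge x_k) \rightarrow \bigvee_{r \in \external{L}} \body{r}$, the premise $\tau \not\models \loopformula{L,P}$ yields (a) $\tau(x_i) = 1$ for every $x_i \in L$, and (b) for each $r \in \external{L}$ some literal of $\body{r}$ is falsified by $\tau$; by the external-support condition $\body{r}^+ \cap L = \emptyset$, so that falsified literal never involves a loop atom of $L$.

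For $\tau^+$ I would extend $\tau$ to a total assignment over $\at{P}$, keeping $\tau^+(x_i) = 1$ for all $x_i \in L$ and assigning every remaining atom a value consistent with the corresponding equivalence in $\completion{P}$. Outside $L$ such values can be chosen by walking the dependency graph; for each $x_i \in L$, the completion equivalence is fulfilled by picking some non-external rule $r$ with $\head{r} = x_i$, whose positive body must intersect $L$ and therefore can be satisfied using the values forced by (a) together with appropriate values for the remaining body atoms outside $L$. The existence of such a consistent extension is the first delicate point; it rests on (b), which removes the obligation of making any external support true. Once $\tau^+$ is built, $\tau \subseteq \tau^+$ and $\completion{P}|_{\tau^+} = \emptyset$ follow by construction.

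The heart of the proof is the survival claim $\{\copyatom{x_1}, \ldots, \copyatom{x_k}\} \subseteq \var{\copyop{P}|_{\tau^+}}$. Type-1 implications $\copyatom{x_i} \rightarrow x_i$ are vacuously satisfied since $\tau^+(x_i) = 1$, so they contribute nothing to UP on $\copyatom{x_i}$. A type-2 implication coming from a rule $r$ with $\head{r} = x_i$ splits into two cases. If $r \in \external{L}$, the literal of $\body{r}$ falsified by $\tau^+$ (from (b), and outside $L$) makes the antecedent of the implication false and satisfies the clause, removing it. If $r \notin \external{L}$, then $\body{r}^+ \cap L \neq \emptyset$, so some $\copyatom{a_l}$ with $a_l \in L$ remains in the antecedent after UP strips the determined body atoms outside $L$; the resulting clause, built exclusively from copy variables of $L$, keeps $\copyatom{x_i}$ in $\var{\copyop{P}|_{\tau^+}}$ provided the other surviving copy variables are also not forced by UP.

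The main obstacle is therefore a global rather than a local one: UP might chain through copy variables of $L$, so I cannot argue for each $x_i$ in isolation. I would close this by a simultaneous fixed-point argument over all $x_i \in L$: no copy variable of $L$ begins as a unit clause under $\tau^+$ (since $\tau^+$ does not set any copy variable and every type-2 implication with head $\copyatom{x_i}$ is, by the preceding case analysis, either already satisfied or leaves at least one $\copyatom{a_l}$ with $a_l \in L$ still in its antecedent), so UP has no seed from which to initiate elimination of any copy variable in $L$. Consequently all $\copyatom{x_i}$ survive in $\var{\copyop{P}|_{\tau^+}}$, completing the proof.
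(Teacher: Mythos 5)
Your overall strategy matches the paper's: unpack $\tau \not\models \loopformula{L,P}$ into ``$\tau(x_i)=1$ for all $x_i\in L$ and no external-support body is satisfied'', extend $\tau$ to a $\tau^+$ with $\completion{P}|_{\tau^+}=\emptyset$ (possible because $\completion{P}$ mentions no copy variables), and argue that the $\copyatom{x_i}$ persist. But you and the paper part ways on the decisive survival step, and your version has a gap. The paper keeps the type-2 implications of the \emph{external supports} alive as the witnesses: since $\tau\not\models\body{r}$ for every $r\in\external{L}$ and copy variables are never assigned, it asserts $r'|_{\tau}\neq\emptyset$ with $\copyatom{\head{r}}\in\var{r'|_{\tau}}$. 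You instead declare those clauses removed (reading $\tau\not\models\body{r}$ as ``some literal of $\body{r}$ is falsified'') and rely on loop-internal rules plus a no-seed fixed-point argument. The fixed-point argument is fine as far as it goes: it shows no $\copyatom{x_i}$ with $x_i\in L$ is ever unit-propagated. But that is necessary, not sufficient, for $\copyatom{x_i}\in\var{\copyop{P}|_{\tau^+}}$; you must also exhibit at least one clause containing $\copyatom{x_i}$ that is \emph{not satisfied} (hence not deleted) under $\tau^+$. The loop-internal clauses you point to are deleted by exactly the mechanism you invoke to kill the external supports: if $\tau^+$ falsifies any non-loop body literal of such a rule, the whole clause becomes satisfied and disappears (``UP strips the determined body atoms outside $L$'' holds only when those literals are satisfied; when they are falsified the clause is removed, not shortened). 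Nothing in your construction of $\tau^+$ prevents this from happening to every rule with head $x_i$ simultaneously, and your caveat about ``other surviving copy variables'' also overlooks copy variables of loop atoms outside $L$ in these antecedents, which can be propagated to $0$.

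The missing ingredient is the completion, which you build into $\tau^+$ but never use in the survival argument: $\completion{P}|_{\tau^+}=\emptyset$ together with $\tau^+(x_i)=1$ forces some rule $r$ with $\head{r}=x_i$ to have $\tau^+\models\body{r}$; since every external support of $L$ has an unsatisfied body under $\tau\subseteq\tau^+$, this $r$ is not in $\external{L}$, so $\body{r}^+\cap L\neq\emptyset$. Its type-2 clause then has every non-copy antecedent literal falsified (hence stripped rather than satisfying the clause), every $\neg\copyatom{a}$ unassigned with $a$ true (so neither satisfied nor unit), and consequent $\copyatom{x_i}$ unpropagated by your fixed-point argument --- this is the clause that witnesses $\copyatom{x_i}\in\var{\copyop{P}|_{\tau^+}}$. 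Tying the choice of $\tau^+$ to these body-satisfying rules would close the gap; as written, the survival claim does not follow.
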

\begin{proof}

	As $\tau \not \models \loopformula{L,P}$,  we have $\forall x_i \in L, \tau(x_i) = 1$ and $\forall r \in \external{L}, \tau \not\models \body{r}$. Let us denote by $r'$ an implication  of type~\ref{l1:type2} corresponding to a rule $r \in \external{L}$.  Then we have $r'|_{\tau} \neq \emptyset$; moreover, if $\head{r} = x_i$, then $\copyatom{x_i} \in \var{r|_{\tau}}$. Since the above observation holds for all $r \in \external{L}$ and for $x_i \in L$, therefore,  $\{\copyatom{x_1}, \ldots, \copyatom{x_k}\} 
	\subseteq \var{\copyop{P}|_{\tau}}$. Observe that for every extension $\tau'$ of $\tau$ that does not assign values to variables in $\{\copyatom{x_1}, \ldots, \copyatom{x_k}\}$, it must be the case that  $\{\copyatom{x_1}, \ldots, \copyatom{x_k}\} 
	\subseteq \var{\copyop{P}|_{\tau'}}$. Furthermore, since the set of variables in $\completion{P}$ does not contain a variable from the set  $\{\copyatom{x_1}, \ldots, \copyatom{x_k}\}$, therefore, there exists an extension, $\tau^{+}$, of $\tau$ such that  $\completion{P}|_{\tau^+} = \emptyset$ and $\{\copyatom{x_1}, \ldots, \copyatom{x_k}\} 
	\subseteq \var{\copyop{P}|_{\tau'}}$.
\end{proof}
\noindent We are now ready to state and prove the correctness of determinism employed in our ASP counter:

\begin{proposition}
  \label{lemma:determinism_equation}
  Let program $P$ be represented as $(F, G)$. Then
	\begin{align}
		\countsm{F}{G} &= \countsm{F|_{\neg{x}}}{G|_{\neg{x}}} + \countsm{F|_{x}}{G|_{x}},\nonumber \\ &~~ \text{for all } x \in \at{P} \label{eq:determinism} \\% \nonumber \\
		\countsm{\bot}{G} &= 0 \label{eq:base_case_1}\\ 
		\countsm{\emptyset}{G} &= 
		\begin{cases}
			1 & \text{if $G = \emptyset$ } \\
			0 & \text{if $\var{G} \subseteq \copyvar{P}$ }
		\end{cases} \label{eq:base_case_2}
	\end{align}
\end{proposition}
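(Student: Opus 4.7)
The plan is to prove each of the three equations directly from the characterization in Theorem~\ref{theorem:sm_equal}, namely that $\countsm{F}{G}$ counts the assignments $\tau$ over $\at{P}$ with $\tau \models F$ and $G|_\tau = \emptyset$.

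For the Shannon expansion equation~(\ref{eq:determinism}), I would partition $\answer{F}{G}$ according to the value $\tau(x) \in \{0,1\}$. The map sending $\tau$ with $\tau(x)=1$ to its restriction $\tau \setminus \{x \mapsto 1\}$ should be a bijection onto $\answer{F|_x}{G|_x}$. Two observations are needed: first, $\tau \models F$ iff $\tau \setminus \{x \mapsto 1\} \models F|_x$, the classical Shannon identity for CNF; second, $G|_\tau = \emptyset$ iff $(G|_x)|_{\tau \setminus \{x \mapsto 1\}} = \emptyset$, which follows from the confluence of unit propagation---committing $x \mapsto 1$ first and then propagating the rest of $\tau$ reaches the same fixpoint as propagating $\tau$ in one shot. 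The symmetric argument handles $\tau(x)=0$, and summing the two cases gives the identity.

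Equation~(\ref{eq:base_case_1}) is immediate: if $F = \bot$, then no assignment satisfies $F$, so $\answer{\bot}{G} = \emptyset$ and $\countsm{\bot}{G} = 0$. For equation~(\ref{eq:base_case_2}), I observe that the recursion reaches a base case only after branching on all atoms of $\at{P}$ relevant to $F$ and $G$, so the partial assignment $\tau$ carried by the recursion is total over these atoms. When $G = \emptyset$, both $\tau \models F = \emptyset$ and $G|_\tau = \emptyset$ hold vacuously, so the count is $1$. When $\var{G} \subseteq \copyvar{P}$ and $G \ne \emptyset$, the hypotheses of Proposition~\ref{lemma:proof_of_part_1} are met ($\completion{P}|_\tau = \emptyset$ and $\emptyset \subset \var{\copyop{P}|_\tau} \subseteq \copyvar{P}$), yielding a loop $L$ with $\tau \not\models \loopformula{L,P}$; Theorem~\ref{theorem:sm_equal} then forces $\tau \notin \answer{F}{G}$, giving count $0$.

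The main obstacle will be the confluence-of-unit-propagation step in the Shannon expansion, because ``$G|_\tau = \emptyset$'' is a fixpoint condition (unit propagation must reduce to only copy-variable unit clauses) rather than pure logical satisfaction; one has to argue carefully that the order of restricting a variable versus propagating the rest does not affect the final state. Once this technical point is settled, the two base cases follow by direct application of the definitions together with Proposition~\ref{lemma:proof_of_part_1}.
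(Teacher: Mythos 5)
Your proposal follows essentially the same route as the paper's proof: you handle \Cref{eq:determinism} by partitioning assignments on the value of $x$ and appealing to the commutation of conditioning with unit propagation (which the paper likewise asserts rather than proves, in the form $\tau \in \answer{F|_{\sigma}}{G|_{\sigma}}$ iff $\sigma \cup \tau \in \answer{F}{G}$); you dismiss \Cref{eq:base_case_1} as trivial; and you discharge the $\var{G} \subseteq \copyvar{P}$ case of \Cref{eq:base_case_2} via \Cref{lemma:proof_of_part_1}, exactly as the paper does. The one divergence is the $G = \emptyset$ case of \Cref{eq:base_case_2}: you argue it ``vacuously'' under the assumption that the recursion bottoms out only on assignments that are total over the relevant atoms, whereas the paper additionally invokes \Cref{lemma:proof_of_part_2} to cover partial leaf assignments --- read contrapositively, it guarantees that when no copy variable survives propagation, no loop formula is violated, so the branch genuinely extends to an answer set. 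Since $F|_{\tau}$ can become empty while atoms of $\at{P}$ remain undecided, your total-assignment assumption is not automatic; you would need either to justify it or to supply the extension argument that \Cref{lemma:proof_of_part_2} provides. Apart from this, your level of rigor matches the paper's, and you correctly flag the confluence of unit propagation as the technical point that both arguments leave implicit.
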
 
\noindent Note that if $\completion{P} = \emptyset$ then either $G = \emptyset$ or $\emptyset \subset \var{G} \subseteq \copyvar{P}$.
\begin{proof}
	The proof comprises the following three parts:
	
	\Cref{eq:determinism} applies determinism by partitioning all answer sets of $(F,G)$ into two parts --
	the answer sets where $x$ is $0$ and $1$, respectively. Observe that 
	performing unit propagation on $(F,G)$ is valid since $\tau \in \stable{F|_{\sigma},G|_{\sigma}}$ if and only if $\sigma \cup \tau \in \stable{F,G}$, where $\sigma \in 2^{\Card{X}}$, $\tau \in 2^{\Card{\at{P}\setminus X}}$, where $X \subseteq \at{P}$.
	
	The proof of the first base case \cref{eq:base_case_1} is trivial. Each answer set of $P$ conforms to the completion of the program $\completion{P}$, where, according to the alternative definition of answer sets, $F = \completion{P}$.
	
	We utilize the helper propositions proved earlier to demonstrate the correctness of the second base case, as outlined in \cref{eq:base_case_2}, 
	which appropriately selects answer sets from the models of completion.
	First, we show that if there is a copy variable in $\copyop{P}|_{\tau}$, where $\completion{P}|_{\tau} = \emptyset$, then one of the loop 
	formulas of the program is not satisfied by $\tau$. The claim is proved in~\Cref{lemma:proof_of_part_1}. Thus, $\tau$
	cannot be extended to an answer set. Second, we demonstrate that if there is an unsatisfied loop formula under a partial
	assignment $\tau_1$, then there exists $\tau_{1}^+$ such that some copy variables are not propagated in $\copyop{P}|_{\tau_{1}^+}$, where $\completion{P}|_{\tau_{1}^+} = \emptyset$ and $\tau_1 \subseteq \tau_{1}^+$.
	The claim is established in~\Cref{lemma:proof_of_part_2}.  Thus, through the method of contradiction, we can infer that, for an assignment $\tau$, if $\copyop{P}|_{\tau} = \emptyset$, then 
	$\tau$ can be extended to an answer set.
	
\end{proof}

\subsection{Conjoin $F$ and $G$}
Until now, we have represented a program $P$ as a pair of formulas $F$ and $G$. However, in this subsection, 
we illustrate that rather than considering the pair, we can regard their conjunction $F \wedge G$, 
and all the subroutines of model counting algorithms work correctly. First,  in~\Cref{lemma:work_with_conjunction}, we demonstrate 
that $F \wedge G$ uniquely defines a program $(F, G)$ under arbitrary partial assignments.

\begin{proposition}
\label{lemma:work_with_conjunction}
For two assignments $\tau_1$ and $\tau_2$, and given a normal program,  $F|_{\tau_1} \wedge G |_{\tau_1} = F|_{\tau_2} \wedge G |_{\tau_2}$ if and only if $F|_{\tau_1} = F|_{\tau_2}$ and $G|_{\tau_1} = G|_{\tau_2}$
\end{proposition}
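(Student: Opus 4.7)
The plan is to exploit the syntactic fact that copy variables $\copyatom{v}$ occur exclusively in $G = \copyop{P}$ and never in $F = \completion{P}$, together with the fact that the partial assignments $\tau_1, \tau_2$ in question are defined only on $\at{P}$ and therefore cannot directly touch a copy variable. The ``if'' direction is immediate from congruence of conjunction, so I would concentrate on the ``only if'' direction, where the goal is to recover $F|_{\tau_i}$ and $G|_{\tau_i}$ individually from the conjunction.

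The key step is to show that every clause of $F|_{\tau_i}$ is copy-variable-free while every surviving clause of $G|_{\tau_i}$ carries at least one copy-variable literal. The first part is easy: $F = \completion{P}$ is a CNF over $\at{P}$, and since $\tau_i$ assigns only original atoms, unit propagation can neither introduce nor eliminate a copy-variable literal in $F|_{\tau_i}$. For the second part, I would trace the unit-propagation rules of the preliminaries on the two clause types of $\copyop{P}$. A type~\ref{l1:type1} clause $\neg \copyatom{v} \vee v$ is either satisfied and removed (when $\tau_i(v)=1$), collapses to the unit clause $\{\neg \copyatom{v}\}$ (when $\tau_i(v)=0$), or is left untouched; in every surviving case the literal $\neg \copyatom{v}$ is present. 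A type~\ref{l1:type2} clause carries the head literal $\copyatom{x}$ together with antecedent literals $\neg \copyatom{a_i}$, and a careful induction on the chain of derived unit clauses shows that no surviving simplification can strip all copy-variable literals from such a clause without either satisfying it outright or leaving $\copyatom{x}$ in place.

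With this syntactic separation in hand, I partition the common formula $F|_{\tau_1} \wedge G|_{\tau_1} = F|_{\tau_2} \wedge G|_{\tau_2}$ into its copy-variable-free sub-conjunction and its copy-variable-bearing sub-conjunction. By the observations above, the former must coincide with both $F|_{\tau_1}$ and $F|_{\tau_2}$, and the latter must coincide with both $G|_{\tau_1}$ and $G|_{\tau_2}$, giving the two required equalities. The main obstacle I anticipate is the inductive analysis of type~\ref{l1:type2} clauses under chained unit propagation, compounded by the paper's convention that $\copyop{P}|_\tau = \emptyset$ denotes the case in which only unit clauses on copy variables survive; one must verify that this normalization commutes with the partition, so that stripped unit-on-copy clauses are identified on both sides. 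The degenerate cases ($F|_{\tau_i} = \bot$, or $P$ tight with $\copyop{P} = \emptyset$) fit the same scheme trivially and should be recorded for completeness.
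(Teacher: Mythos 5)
Your ``if'' direction and the first half of your separation claim (every clause of $F|_{\tau_i}$ is copy-variable-free) are fine, but the second half --- that every surviving clause of $G|_{\tau_i}$ retains at least one copy-variable literal --- is false, and it is the load-bearing step of your partition argument. A type~2 clause $\neg\copyatom{a_1}\vee\ldots\vee\neg\copyatom{a_k}\vee\neg b_1\vee\ldots\vee c_n\vee\copyatom{x}$ can lose \emph{all} of its copy literals without being satisfied: if $\tau$ assigns $x$ to $0$, the type~1 clause $\neg\copyatom{x}\vee x$ collapses to the unit $\{\neg\copyatom{x}\}$ and the head literal $\copyatom{x}$ is stripped; and each antecedent literal $\neg\copyatom{a_i}$ is stripped whenever $\copyatom{a_i}$ is derived as a positive unit through a chain of type~2 implications bottoming out in one with a copy-free body. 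Concretely, for the program $\{x\leftarrow y.\ \ y\leftarrow x.\ \ x\leftarrow b\}$ and $\tau(x)=0$ with $b$ unassigned, the implication $b\rightarrow\copyatom{x}$ of $\copyop{P}$ propagates to the copy-free unit $\{\neg b\}$, which survives in $G|_\tau$ (the paper's convention only discards unit clauses \emph{on copy variables}). So the common conjunction cannot be split into its $F$-part and $G$-part by inspecting copy literals alone, and the induction you propose on type~2 clauses would be attempting to prove a false statement.

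The repair --- and the crux of the paper's proof, which your proposal in effect inverts --- is to show that whenever a type~2 clause sheds all its copy literals, the resulting residual clause is \emph{also} a residual clause of $F|_\tau$: the same rule $r$ contributes $\body{r}\rightarrow\head{r}$ to $\completion{P}$, and the very conditions that strip $\copyatom{x}$ and the $\copyatom{a_i}$ (namely $x$ propagated to $0$ and each $a_i$ to $1$) strip the corresponding original literals from that completion clause, yielding the identical residue (in the example above, $\{\neg b\}$ lies in $F|_\tau$ as well). The paper therefore argues by contradiction on a single clause $c\in F|_{\tau_1}\setminus F|_{\tau_2}$: equality of the conjunctions forces $c\in G|_{\tau_2}$, $c$ is copy-free because it lies in $F|_{\tau_1}$, and the absorption observation then puts $c$ back into $F|_{\tau_2}$. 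Your partition strategy can only be salvaged by adding this absorption lemma; without it there is a genuine gap.
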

\begin{proof}
	
	(i)~(proof of `if part') The proof is trivial.
	
	(ii)~(proof of `only if part') \textbf{Proof By Contradiction}. Assume that there is a clause 
	$c \in F|_{\tau_1}$ and $c \not\in F|_{\tau_2}$. As $F|_{\tau_1} \wedge G |_{\tau_1} = F|_{\tau_2} \wedge G |_{\tau_2}$ 
	clause $c \in G|_{\tau_2}$. As $c \in F|_{\tau_1}$, $c$ has no copy variable. Assume that 
	clause $c$ is derived from the unit propagation of $\copyop{r}$, i.e., $c = \copyop{r}|_{\tau_2} 
	= \copyatom{a_1} \wedge \ldots \wedge \copyatom{a_k} \wedge b_1 \wedge \ldots \wedge b_m \wedge \neg{c_1} 
	\wedge \ldots \wedge \neg{c_n} \rightarrow \copyatom{x}|_{\tau_2}$, where $\forall i, \copyatom{a_i}$ 
	propagates to $1$ and $\copyatom{x}$ propagates to $0$, which follows that under assignment $\tau_2$,
	the atom $x$ is assigned to $0$ and $\forall i, a_i$ is assigned to $1$. 
	The rule $r$ also belongs to $\completion{P}$ and 
	both $F|_{\tau_1}$ and $F|_{\tau_2}$ are derived from $\completion{P}$. 
	Thus, under assignment $\tau_2$, if $x$ is assigned to $0$ and each of the $a_i$'s is assigned to $1$, then 
	the clause $c \in  F|_{\tau_2}$, which must be derived from rule $r$, so contradiction.
\end{proof}

As a result, it is possible to perform unit propagation on $F \wedge G$ instead of performing unit propagation on $F$ and $G$ separately.
Although both formulas $F$ and $G$ are necessary to check the base cases, we can still check base cases by considering the conjunction $F \wedge G$. 
Checking the first base case (\cref{eq:base_case_1}) is trivial because if an assignment $\tau$ \textit{conflicts} on $F$, then $\tau$ conflicts on $F \wedge G$ as well.
Additionally, calculating $\var{F \wedge G}$ suffices to check the second base case (\cref{eq:base_case_2}). 
The component decomposition part also works with their conjunction 
because the component decomposition condition $(\var{F_1} \cup \var{G_1}) \cap (\var{F_2} \cup 
\var{G_2}) = \emptyset$ is equivalent to $(\var{F_1 \wedge G_1}) \cap (\var{F_2 \wedge G_2}) = 
\emptyset$. Moreover, as we restrict our decision to $\at{P}$, the conjunction $F \wedge G$ does not introduce new conflicts --- if a partial 
assignment $\tau$ conflicts on $F \wedge G$, then $\tau$ conflicts on $F$.
To summarize, the model counting algorithm correctly computes the answer set count, even when processing the formula
$F \wedge G$ instead of processing the two formulas $F$ and $G$ separately.

\subsection{{\toolname}: Putting It All Together}
In this subsection, we aim to extend a propositional model counter to an exact answer set counter by integrating the alternative answer set definition, component decomposition 
(\Cref{lemma:establish_decomposition}), and determinism (\Cref{eq:determinism}). 

\begin{algorithm}[h]
	\caption{$\sharpasp(P)$}
	\label{alg:initial_computation}

	\begin{algorithmic}[1]
	\Function{$\counter{\phi}{CV}$}{} \Comment{modified CNF counter}
	\If {$\phi = \emptyset$} 
		\Return{$1$}	
	\ElsIf  {$\var{\phi} \subseteq CV$}
		\Return{$0$}	\label{line:base_case}
	\ElsIf {$\emptyset \in \phi$}
		\Return{$0$}	
	\EndIf
	\State $\mathsf{v} \gets \decide{\phi}$  \label{line:branching}
	\For {$\ell \gets \{\mathsf{v},  \neg{\mathsf{v}}\}$}
		\State $\cnt[\ell] \gets 1$
		\State $\mathsf{comps} \gets \decompose{\phi|_{\ell}}$ \label{line:decomposition}
		\ForEach {$\mathsf{c} \in \mathsf{comps}$} 
		\If {$\mathsf{c} \in \cache$} \label{line:caching}
		\State $\cnt[\ell] \gets \cnt[\ell] \times \cache[\mathsf{c}]$
		\Else
		\State $\cnt[\ell] \gets \cnt[\ell] \times \counter{\mathsf{c}}{CV}$
		\EndIf
		\If {$\cnt[\ell] = 0$}
		\State \textbf{break}
		\EndIf
		\EndFor
	\EndFor
	\State $\cache[\phi] \gets \cnt[\mathsf{v}] + \cnt[\neg{\mathsf{v}}]$
	\State \Return{$\cache[\phi]$}
	\EndFunction
	\State $F \gets \completion{P},  G \gets \copyop{P}$\label{line:compcopy} \Comment{Algorithm starts here}
	\State \Return{$\counter{F \wedge G}{\copyvar{P}}$}  \label{line:counterinvoke}
	\end{algorithmic}
\end{algorithm}

The pseudocode for {\toolname} is presented in~\Cref{alg:initial_computation}.
Given a non-tight program $P$, {\toolname} initially computes $\completion{P}$ 
and $\copyop{P}$ (Line~\ref{line:compcopy} of \Cref{alg:initial_computation}) and then calls the adapted propositional model counter
$\mathsf{Counter}$, with $\completion{P} \wedge \copyop{P}$ as the input formula, 
and $\copyvar{P}$ as the set of copy variables (Line \ref{line:counterinvoke} of \Cref{alg:initial_computation}). The model counting algorithm utilizes 
$\copyvar{P}$ to check the base cases (Equation (\ref{eq:base_case_1}) and (\ref{eq:base_case_2})) of the~\Cref{eq:determinism}.

The $\mathsf{Counter}$ differs from the existing propositional model counters mainly in two ways. Firstly,  
following~\cref{eq:base_case_2}, the $\mathsf{Counter}$ returns $0$ if it encounters a component  
consisting solely of copy variables (Line \ref{line:base_case} of \Cref{alg:initial_computation}).
Secondly, during \textit{variable branching}, $\mathsf{Counter}$ selects 
variables from $\var{\completion{P}}$ (Line~\ref{line:branching} of 
\Cref{alg:initial_computation}). Apart from that, the subroutines of unit propagation, 
component decomposition (Line~\ref{line:decomposition} of~\Cref{alg:initial_computation}), and caching\footnote{Model counter stores the count of previously 
solved subformulas by a caching mechanism to avoid recounting.} 
(Line~\ref{line:caching} of~\Cref{alg:initial_computation}) within $\mathsf{Counter}$ and a propositional model counter remain unchanged.

While {\toolname} uses copy variables and copy operations similar to~\asproblog, 
there are notable distinctions between the two approaches.
Firstly, {\toolname} aims to justify only loop atoms, whereas the~\asproblog~algorithm aims
to justify all founded variables.
Our empirical findings underscore that loop atoms constitute a relatively small subset of the founded variables. 
Consequently, the copy operation of~\asproblog~introduces more copy variables and logical implications involving copy variables compared to ours. 
Secondly, the unit propagation techniques employed in~\asproblog~differ from those used in {\toolname}. Specifically,~\asproblog~
performs unit propagation by propagating only the justified literals from a program while leaving the unjustified literals 
in the residual program. In contrast, {\toolname} adheres to the conventional unit propagation technique and employs copy variables to determine whether all atoms are justified.

\section{Experimental Evaluation}
\label{sec:experiment}

We developed a prototype\footnote{Available at \url{https://github.com/meelgroup/sharpASP}} of {\toolname} on top of the existing state-of-the-art model counters ({\ganak}, {\dfour}, and {\sharpsattd})~\cite{KJ2021,SRSM2019,LM2017}. We modified {\sharpsattd} by disabling all the preprocessing techniques, as they would no longer preserve answer sets.  
We use notations {\toolname}({\shortsharpsattd}), {\toolname}({\shortganak}), and {\toolname}(D) to represent {\toolname} 
with underlying propositional model counters \sharpsattd, \ganak, and \dfour, respectively. 

We compared the performance of {\toolname} with that of the prior state-of-the-art exact ASP counters:  
\clingo\footnote{\clingo counts answer sets via enumeration.}~\cite{GKNS2007}, \asproblog~\cite{ACMS2015}, and DynASP~\cite{FHMW2017}. In addition, we utilized two translations from ASP to SAT:~(i)~lp2sat~\cite{Fages94,JN2011,Bomanson2017}
~(ii)~aspmc~\cite{EHK2021}, followed by invoking off-the-shelf propositional model counters. 
We use notations lp2sat+X and aspmc+X to denote lp2sat and aspmc followed by propositional model counter X, respectively.

Our benchmark suite consists of non-tight programs from the domains of the Hamiltonian cycle and graph reachability problems~\cite{KESHFM2022,ACMS2015}. 
We also considered the benchmark set from~\cite{EHK2021} (designated as aspben). 
We gathered a total of $1470$ graph instances from the benchmark set of~\cite{KESHFM2022,EHK2021}.  
The \textit{choice}/\textit{random} variables in the hamiltonian cycle and aspmc benchmark pertain to graph edges. 
While the choice variables are associated with graph nodes for the graph reachability problem.

All experiments were carried out on a high-performance computer cluster, 
where each node consists of AMD EPYC $7713$ CPUs running with $128$ real cores. 
The runtime and memory limit were set to $5000$ seconds and $8$GB, respectively.

\subsection{Runtime Performance Comparison}

    \begin{table*}[t]
      \centering
      \begin{tabular}{m{3em} m{3em} m{3em} m{3em} m{3em} m{2em} m{3em} m{3em} m{2em} m{3em} m{3em} m{2em} m{3em}}
      \toprule
      &  &  &  & \multicolumn{3}{|c|}{aspmc} & \multicolumn{3}{|c|}{lp2sat} & \multicolumn{3}{|c|}{{\toolname}} \\
      & clingo & \asproblogshort & DynASP & D & \shortganak & \shortsharpsattd & D & \shortganak & \shortsharpsattd & D & \shortganak & \shortsharpsattd \\
      Hamil. (405)  & 230 & 0 & 0 & 173 & 197 & 167 & 135 & 164 & 112 & 238 & 261 & 300\\
      Reach. (600) & 318 & 149 & 2 & 187 & 288 & 421 & 317 & 471 & 167 & 293 & 458 & 463\\
      aspben (465) & 321 & 39 & 208 & 278 & 285 & 252 & 278 & 193 & 193 & 282 & 273 & 260\\
      \midrule \midrule 
      Total (1470) & 869 (4285) & 188 (8722) & 210 (8571) & 638 (5829) & 770 (5015) & 840 (4572) & 730 (5282) & 668 (5734) & 776 (5082) & 813 (4514) & 992 (3473) & 1023 (3372)\\
      \bottomrule
      \end{tabular}
      \caption{The performance comparison of {\toolname} vis-a-vis other ASP counters on different problems in terms of number of solved instances and PAR$2$ scores.
      }
      \label{table:comparison_counter_problem_wise}
      \end{table*}
    \begin{table}[t]
      \centering
      \begin{tabular}{m{3em} m{2em} m{3em} m{3em} m{3em} m{4em}}
      \toprule
      & & \multicolumn{4}{|c|}{clingo ($\leq 10^5$) + }\\ 
      & \shortstack{clingo} & \shortstack{\asproblogshort} & \shortstack{aspmc\\+STD} & \shortstack{lp2sat\\+STD} & \shortstack{{\toolname}\\(STD)}\\
      \midrule
      Hamil. (405) & 230 & 123 & 167 & 128 & 302\\
      Reach. (600) & 318 & 152 & 418 & 470 & 460\\
       aspben (465) & 321 & 278 & 284 & 297 & 300\\
      \midrule \midrule 
      Total (1470) & 869 (4285) & 553 (6239) & 869 (4310) & 895 (4205) & \textbf{1062} (3082)\\
      \bottomrule
      \end{tabular}
      \caption{The performance comparison of hybrid counters in terms of number of solved instances and PAR$2$ scores.
      The hybrid counters correspond to last $4$ columns that employ clingo enumeration followed by ASP counters.
      The clingo ($2$nd column) refers to clingo enumeration for $5000$ seconds. 
      }
      \label{table:hybrid_counter_performance}
    \end{table}
      \begin{figure*}
        \centering
        \begin{subfigure}[t]{0.33\textwidth}
          \centering
            \includegraphics[width=0.98\linewidth]{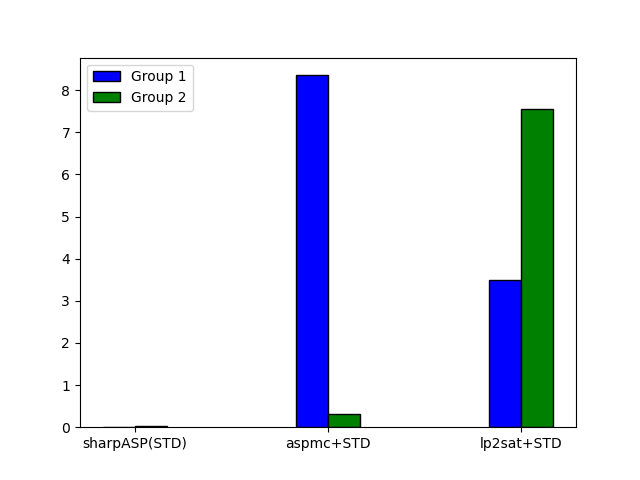}
          \caption{Time spent in BCP (seconds)
          }
        \label{fig:bcp_time}
        \end{subfigure}
        \begin{subfigure}[t]{0.33\textwidth}
          \centering
            \includegraphics[width=0.98\linewidth]{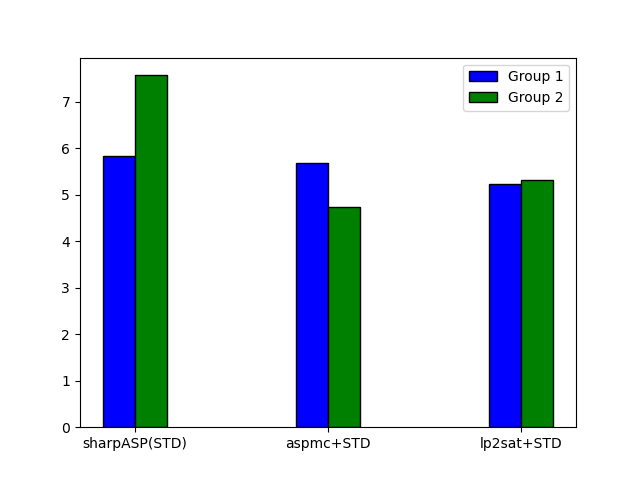}
          \caption{Number of decisions ($10$-base log).
          }
        \label{fig:decision}
        \end{subfigure}
        \begin{subfigure}[t]{0.33\textwidth}
          \centering
            \includegraphics[width=0.98\linewidth]{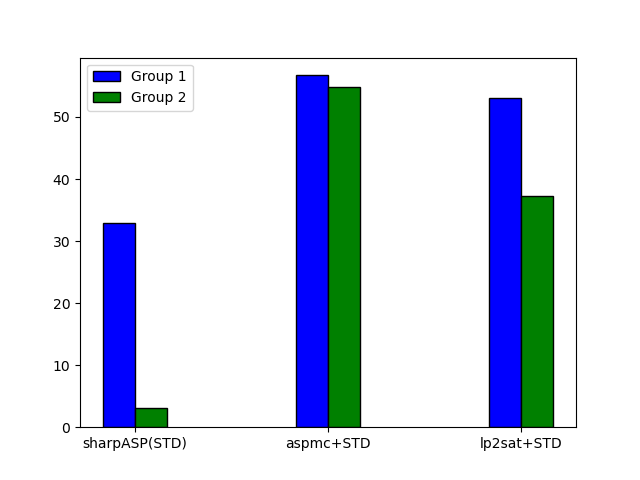}
          \caption{Cache hit (percentage).
          }
        \label{fig:cache_hit}
        \end{subfigure}
        \caption{The ablation study of {\toolname}(\shortsharpsattd), \asptosharpsat, and \aspmcsharpsat on Group $1$ and Group $2$ benchmarks.}
        \label{fig:ablation_study_across}
      \end{figure*}

The performance of our considered counters varies across different computational problems. Our evaluation of their performance, considering both total solved instances and PAR$2$ scores\footnote{
PAR$2$ is a penalized average runtime that 
penalizes two times the timeout for each unsolved benchmarks.},
for each computational problem is detailed in~\Cref{table:comparison_counter_problem_wise}.
The table demonstrates that {\toolname} either outperforms or achieves performance on par with existing ASP counters, particularly for the Hamiltonian cycle and graph reachability problems.
However, on aspben, the clingo enumeration outperforms other answer set counters.

We observed that clingo demonstrates superior performance, particularly on instances with a limited number of answer sets. 
Since this observation applies to all non-enumeration based counters in our repertoire, we devised a hybrid counter that combines the strengths of enumeration based counting with that of translation and propositional SAT based counting. 
Based on data collected from runs of clingo, there is a shift in the runtime performance of clingo when the count of 
answer sets exceeds $10^5$ (within our benchmarks). To ensure that our experiments can be replicated on different 
platforms, we chose to use an answer set count-based threshold instead of a time-based threshold.  
Hence, our hybrid counter is structured as follows: it initiates enumeration with a maximum of 
$10^5$ answer sets. In cases where not all answer sets are enumerated, 
the hybrid counter then employs an ASP counter with a time limit of $5000-t$ seconds, where $t$ is the time spent in clingo. 
The performance of the hybrid counters is tabulated in~\Cref{table:hybrid_counter_performance}, 
demonstrating that the hybrid counter based on {\sharpasp} clearly outperforms competitors by a handsome margin.

\subsection{Ablation Study}

We now delve into the internals, and to this end,  we  
form two groups of benchmarks -- \textbf{Group $1$}: instances where {\toolname}(\shortsharpsattd) runs faster than \asptosharpsat and \aspmcsharpsat, which highlights 
the scenarios where the {\toolname}(\shortsharpsattd) algorithm is more efficient than \asptosharpsat and \aspmcsharpsat; and \textbf{Group $2$}: instances where 
\asptosharpsat and \aspmcsharpsat run faster than {\toolname}(\shortsharpsattd), which shows the opposite scenario of Group $1$. 
Each group consists of 
$10$ instances that had more than $10^5$ answer sets, and therefore clingo could not enumerate all answer sets. By running the instances on all versions of \sharpsattd, we record the time spent on the procedure 
\textit{binary constraint propagation} (BCP), number of decisions, and \textit{cache hit rate} for each counter. 
Taking each group's average of each quantity provides a clear and concise
way to see how {\toolname} compares with others on average
across all benchmarks.  The statistical findings across all counters are visually summarized in~\Cref{fig:ablation_study_across}.

The strength of {\toolname} lies in its ability to minimize the time spent on binary constraint propagation (BCP) compared to other counters. 
The significantly large formula size increases the overhead for BCP in the case of {\asptosharpsat} and {\aspmcsharpsat}. 
However, we also observe that {\toolname} suffers from high overhead in the branching phase 
and high {\em cache misses} on Group $2$ instances. To find out the reason for a higher number of decisions, 
we analyze the decomposibility of Group $1$ and Group $2$ instances.

Our investigation has shown that, on all variants of {\sharpsattd}, most instances of Group $1$ start decomposing at nearly the same decision levels.
Thus, {\toolname}(\shortsharpsattd) outperforms on Group $1$ instances 
due to spending less time on BCP. We observed that several instances of Group $1$ took comparatively more decisions to make to 
count the number of answer sets on {\toolname}(\shortsharpsattd). One possible explanation is that \aspmcsharpsat and \asptosharpsat assign auxiliary variables, which have higher  
{\em activity scores} compared to original ASP program variables. 
Assigning auxiliary variables facilitates \asptosharpsat and \aspmcsharpsat by assigning fewer variables. 
However, {\toolname}(\shortsharpsattd) outperforms others due to structural simplicity and 
low-cost BCP. 

Our investigation has also revealed that Group $2$ instances are hard-to-decompose on {\toolname}(\shortsharpsattd) compared to other counters -- necessitating more variable assignments to break down an instance into disjoint components.
Since {\toolname}(\shortsharpsattd) assigns the original set of variables; it necessitates a larger number of decisions to count answer sets on hard-to-decompose instances compared to aspmc and lp2sat based counters. 
Moreover, the structure of hard-to-decompose instances also worsens the cache performance of {\toolname}.
However, \asptosharpsat and \aspmcsharpsat effectively decompose the input formula by initially assigning auxiliary variables.

In light of these findings, it is evident that the performance of {\toolname} is critically reliant on the decomposability of input instances and the variable branching 
heuristic employed. Notably, {\toolname} demonstrates superior performance when applied to {\em structurally simpler} input instances. If a variable branching heuristic effectively decomposes the input formula by assigning variables within the ASP programs, {\toolname} outperforms alternative ASP 
counters. Conversely, when the input formula's decomposability is hindered, alternative approaches involving the introduction of auxiliary variables prove to be more advantageous.

\section{Conclusion}\label{sec:conclusion}
Our approach, called {\toolname}, lifts the component caching-based propositional model counting to ASP counting without incurring a blowup in the size of the resulting formula. The proposed approach utilizes an alternative definition for answer sets, which enables the natural lifting of decomposability and determinism.  
Empirical evaluations show that {\toolname} and its corresponding hybrid solver can handle a greater number of instances compared to other techniques. 
As a future avenue of research, we plan to investigate extensions of our approach in the context of disjunctive programs. 

\paragraph{Acknowledgement}
This work was supported in part by National Research Foundation Singapore under its NRF Fellowship 
Programme [NRF-NRFFAI1-2019-0004], Ministry of Education Singapore Tier $2$ grant MOE-T2EP20121-0011, 
and Ministry of Education Singapore Tier $1$ Grant [R-252-000-B59-114]. 
The computational work for this article was performed on resources of the National Supercomputing 
Centre, Singapore (\url{https://www.nscc.sg}).
\bibliography{aaai24}
\end{document}